\newtheorem{theorem}{Theorem}
\newtheorem{lemma}[theorem]{Lemma}
\newtheorem{definition}{Definition}
\newtheorem{claim}[theorem]{Claim}
\newtheorem{conjecture}{Conjecture}
\newtheorem*{rep@theorem}{\rep@title}
\newcommand{\newreptheorem}[2]{%
\newenvironment{rep#1}[1]{%
 \def\rep@title{#2 \ref{##1}}%
 \begin{rep@theorem}}%
 {\end{rep@theorem}}}
\newcommand{\R}{\mathbb{R}}
\newcommand{\In}{\textrm{In}}
\newcommand{\eps}{\varepsilon}
\newcommand{\from}{\colon}
\newcommand {\ignore} [1] {}
\newcommand{\etal}{{\em et al.\ }\xspace}
\newcommand*\samethanks[1][\value{footnote}]{\footnotemark[#1]}
\title{Lower Bounds for Multiplication via Network Coding}
\author{Peyman Afshani\thanks{Aarhus University.
Email: \texttt{peyman@cs.au.dk}.} \and Casper Freksen\thanks{Aarhus University. Supported by a Villum Young Investigator Grant. Email: \texttt{\{cfreksen, lior.kamma\}@cs.au.dk}.} \and Lior Kamma\samethanks \and Kasper
  Green Larsen\thanks{Aarhus University. Supported by a Villum Young
    Investigator Grant and an AUFF Starting Grant.
Email: \texttt{larsen@cs.au.dk}.}}
\begin{document}

\date{}
\maketitle

\begin{abstract}
Multiplication is one of the most fundamental computational problems,
yet its true complexity remains elusive. The best known upper bound,
by F\"{u}rer,
shows that two $n$-bit numbers can be multiplied via a boolean circuit of size
$O(n \lg n \cdot 4^{\lg^*n})$, where $\lg^*n$ is the very slowly growing
iterated logarithm. In this work, we prove that if a central
conjecture in the area of network coding is true, then any constant
degree boolean
circuit for multiplication must have size $\Omega(n \lg n)$, thus
almost completely settling the complexity of multiplication
circuits. We additionally revisit classic conjectures in circuit complexity,
due to Valiant, and show that the network coding conjecture also
implies one of Valiant's conjectures.
\end{abstract}

\section{Introduction}
Multiplication is one of the most fundamental computational problems and the simple ``long multiplication'' $O(n^2)$-time algorithm for multiplying two $n$-digit numbers is taught to elementary school pupils around the world. Despite its centrality, the true complexity of multiplication remains elusive. In 1960, Kolmogorov conjectured that the thousands of years old $O(n^2)$-time algorithm is optimal and he arranged a seminar at Moscow State University with the goal of proving this conjecture. However only a week into the seminar, the student Karatsuba came up with an $O(n^{\lg_2 3}) \approx O(n^{1.585})$ time algorithm~\cite{Karatsuba:1962:MoMDNbAC}. The algorithm was presented at the next seminar meeting and the seminar was terminated. This sparked a sequence of improved algorithm such as the Toom-Cook algorithm~\cite{Toom:1963:TCoaSoFERtMoI,Cook:1966:OtmCToF} and the Sch\"{o}nhage-Strassen algorithm~\cite{Schonhage:1971:SMgZ}. The Sch\"{o}nhage-Strassen algorithm, as well as the current fastest algorithm by F\"{u}rer~\cite{Furer:2009:FIM}, are both based on the Fast Fourier Transform (FFT). F\"{u}rer's algorithm can be shown to run in time $O(n \lg n \cdot 4^{\lg^* n})$ when multiplying two $n$-bit numbers~\cite{Harvey:2018:FIMuSLV}. It can even be implemented as a constant degree Boolean circuit of the same size. Here $\lg^* n$ is the very slowly growing iterated logarithm. 

But what is the true complexity of multiplying two $n$-bit numbers? Can it be done via e.g. a Boolean circuit of size $O(n)$ like addition? Or is multiplication strictly harder? Our main contribution is to show a connection between multiplication and a central conjecture by Li and Li~\cite{lili} in the area of \emph{network coding}. Our results show that if the conjecture by Li and Li~\cite{lili} is true, then any constant degree Boolean circuit for computing the product of two $n$-bit numbers must have size $\Omega(n \lg n)$. This establishes a conditional lower bound for multiplication that comes within a $4^{\lg^*n}$ factor of F\"{u}rer's upper bound and implies that multiplication is strictly harder than addition.

Before diving into the details of our results, we first give a brief introduction to network coding.

\paragraph{Network Coding.}
Network coding studies communication problems in graphs. Given a graph $G$ with capacity constraints on the edges and $k$ data streams, each with a designated source-sink pair of nodes $(s_i,t_i)$ in $G$, what is the maximum rate at which data can be transmitted concurrently between the source-sink pairs? One solution is to just forward the data, which reduces the problem to a {\em multicommodity flow} problem. The central question in network coding is whether one can achieve a higher rate by using coding/bit tricks. This question is known to have a positive answer in directed graphs, where the rate increase may be as high as a factor $\Omega(|G|)$ (by sending XOR's of carefully chosen input bits), see e.g.~\cite{Adler:soda}. However the question remains wide open for undirected graphs where there are no known examples for which network coding can do better than the multicommodity flow rate. A central conjecture in network coding, due to Li an Li~\cite{lili}, says that coding yields no advantage in undirected graphs.
\begin{conjecture}[Undirected $k$-pairs Conjecture~\cite{lili}]
\label{con:undirected}
The coding rate is equal to the Multicommodity-Flow rate in undirected graphs.
\end{conjecture}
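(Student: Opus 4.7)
The plan would be to attack Conjecture~\ref{con:undirected} by the standard information-theoretic route, since the left-hand side (coding rate) is intrinsically an entropy quantity while the right-hand side (multicommodity flow rate) is a combinatorial LP quantity, and the goal is to show these two match on every undirected instance. Concretely, I would fix an $\eps>0$ and a valid network code achieving a rate vector $(r_1,\dots,r_k)$, and associate to every undirected edge $e=\{u,v\}$ a pair of random variables $Y_{u\to v}, Y_{v\to u}$ representing the messages transmitted in each direction, subject to $H(Y_{u\to v}) + H(Y_{v\to u}) \le c(e) + \eps$ per unit time. The sources $X_1,\dots,X_k$ are mutually independent, each $t_i$ can decode $X_i$, and each outgoing $Y$ at a node $v$ is a function of $v$'s inputs plus $v$'s own source (if any).

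The next step is to derive cut constraints. For any bipartition $V = S \sqcup \bar S$, let $I(S)$ be the set of indices $i$ with $s_i \in S, t_i \in \bar S$ or vice versa, and let $E(S)$ be the cut edges. I would try to prove the inequality $\sum_{i \in I(S)} r_i \le \sum_{e \in E(S)} c(e)$ by feeding the independence of the $X_i$'s and the decodability at the $t_i$'s into submodularity of entropy, summed over the edges of the cut. The crucial use of undirectedness is that the total entropy in both directions across an edge is bounded by its capacity, so one cannot charge the same capacity to opposite-direction demands twice. If this cut bound holds for every $S$, then by LP duality for the concurrent multicommodity flow problem, the coding rate vector lies in the closure of the flow polytope, and one obtains Conjecture~\ref{con:undirected} by taking $\eps \to 0$.

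The hard part — and this is where the strategy is genuinely stuck, consistent with the conjecture being open for nearly two decades — is the gap between what entropy-plus-submodularity actually yields and the sharp cut bound above. Shannon-type inequalities alone are known to give only the sparsity bound up to an $O(\log k)$ Leighton--Rao factor, and attempts to close the gap via non-Shannon inequalities (Zhang--Yeung and successors) have not produced a single extra bit of tightness for undirected networks. Partial results of Harvey--Kleinberg--Lehman and Jain--Vazirani--Yannakakis achieve the conjecture only in restricted regimes (e.g., bit-vector or single-commodity settings). My honest expectation is therefore that any clean information-theoretic sketch of the form above will bottleneck at the step of charging opposite-direction flow on the same edge, and that a full proof would require either a genuinely new information inequality tailored to undirected networks or a combinatorial reduction bypassing entropy altogether — which is precisely why the present paper takes the conjecture as a hypothesis rather than attempting to resolve it.
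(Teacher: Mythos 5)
There is nothing in the paper to compare your attempt against: the statement you were asked about is Conjecture~\ref{con:undirected}, the Li--Li undirected $k$-pairs conjecture, which the paper explicitly treats as an open hypothesis (``it has heretofore resisted all attempts at either proving or refuting it'') and uses only as an assumption in Theorems~\ref{th:multiplicationLB}, \ref{th:shiftLB} and~\ref{th:depth3MultiplicationLB}. You correctly recognize this and do not claim a proof, which is the right call; but as a proof attempt your sketch should be graded as incomplete by its own admission.

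Beyond the acknowledged bottleneck, there is a concrete flaw in the middle of your plan: even if you could establish the sharp cut bound $\sum_{i \in I(S)} r_i \le \sum_{e \in E(S)} c(e)$ for every bipartition $S$, this would \emph{not} imply that the rate vector is achievable by multicommodity flow. The cut condition is necessary but not sufficient for concurrent flow in undirected graphs; the LP dual of maximum concurrent flow is a shortest-path/metric condition, not a cut condition, and the gap between sparsest cut and flow rate can be $\Theta(\lg k)$ (this is exactly the Leighton--Rao regime you mention, but the gap lives on the combinatorial side, not only on the entropy side). Indeed, the known partial results bound the undirected coding rate by sparsest-cut-type quantities, and the conjecture remains open precisely because that bound does not close the gap to the flow rate. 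So your proposed endgame via ``LP duality places the rate vector in the flow polytope'' would fail even under the most optimistic outcome of the entropy argument; a correct attack would have to certify the metric/distance dual directly (or find some entirely different route), which is consistent with the paper's choice to assume the conjecture rather than prove it.
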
 
Despite the centrality of this conjecture, it has heretofore resisted all attempts at either proving or refuting it. 
Conjecture~\ref{con:undirected} has been used twice before for proving lower bounds for computational problems. Adler \etal~\cite{Adler:soda} were the first to initiate this line of study. They presented conditional lower bounds for computing the transpose of a matrix via an \emph{oblivious algorithm}. Here oblivious means that the memory access pattern is fixed and independent of the input. Since a circuit is oblivious, they also obtain circuit lower bounds for matrix transpose. Very recently Farhadi \etal~\cite{FHLS18} showed how to remove the \emph{obliviousness} assumption for external memory problems. Their main result was a tight lower bound for external memory integer sorting, conditioned on Conjecture~\ref{con:undirected} being true.

\subsection{Our Results}

Our main result is an exciting new connection between network coding and the complexity of multiplication. Formally, we prove the following theorem:

\begin{theorem} \label{th:multiplicationLB}
Assuming Conjecture~\ref{con:undirected}, every boolean circuit with arbitrary gates and bounded in and out degrees that computes the product of two numbers given as two $n$-bit strings has size $\Omega(n \lg n)$.
\end{theorem}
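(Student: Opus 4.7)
The plan is to reduce every shift of an $n$-bit string to a single multiplication, thereby producing a family of $n$ network coding instances on the underlying graph of any multiplication circuit. Averaging the multicommodity flows provided by Conjecture~\ref{con:undirected} over these instances, and balancing the result against an elementary ball-growth lower bound on distances in bounded-degree graphs, should then give the desired $\Omega(n \lg n)$ bound.

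\emph{From shifts to NC schemes.} Fix any constant-degree Boolean circuit $C$ of size $s$ that multiplies two $n$-bit numbers $x, y$, with input gates $x_0, \dots, x_{n-1}$ (for $x$) and output gates $z_0, \dots, z_{2n-1}$. Let $G$ denote its underlying undirected graph; the bounded-degree assumption gives $|E(G)| = O(s)$. For each $c \in \{0, 1, \dots, n-1\}$, hard-wiring $y = 2^c$ turns $C$ into a circuit that outputs $x \cdot 2^c$, the left-shift of $x$ by $c$ positions, so that $z_{i+c}$ carries exactly $x_i$ for every $0 \le i \le n-1$. Viewed on $G$, this is a network coding scheme of rate $1$ for the $n$ source-sink pairs $\mathcal{I}_c \eqdef \{(x_i, z_{i+c}) : 0 \le i \le n-1\}$.

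\emph{Multicommodity flow and averaging.} By Conjecture~\ref{con:undirected}, for every $c$ the graph $G$ (with unit edge capacities) admits a multicommodity flow that routes one unit from $x_i$ to $z_{i+c}$ for every $i$. On the one hand the sum of this flow over edges is at most $|E(G)| = O(s)$; on the other hand, since each commodity is routed along paths of total length at least $d_G(x_i, z_{i+c})$, that sum is at least $\sum_i d_G(x_i, z_{i+c})$. Summing over all $c$ yields
\[
\sum_{i=0}^{n-1} \sum_{c=0}^{n-1} d_G(x_i, z_{i+c}) \;\le\; O(ns).
\]

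\emph{Ball-growth lower bound.} Since $G$ has maximum degree $D = O(1)$, the ball of radius $r$ around any vertex contains at most $(D+1)^r$ vertices. Fix $i$: the $n$ distinct output vertices $z_i, z_{i+1}, \dots, z_{i+n-1}$ cannot all cluster around $x_i$, as at most $(D+1)^r$ of them lie within distance $r$. Choosing $r = \tfrac{1}{2} \log_{D+1} n$ shows that at least $n/2$ of them sit at distance $\Omega(\lg n)$ from $x_i$, so $\sum_c d_G(x_i, z_{i+c}) = \Omega(n \lg n)$. Summing over $i$ gives $\Omega(n^2 \lg n)$ as a lower bound on the left-hand side above, and combining with the upper bound $O(ns)$ forces $s = \Omega(n \lg n)$.

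\emph{Main obstacle.} A single shift instance is too weak to bound $s$: for any fixed $c$, the designer is free to place $x_i$ adjacent to $z_{i+c}$, so one NC instance only gives $\Omega(n)$. The key leverage is that the \emph{same} graph $G$ must simultaneously support all $n$ shift instances, whence the exponential ball-growth of bounded-degree graphs prevents $x_i$'s $n$ different targets from all being nearby. Making the averaging step carry the full $\lg n$ factor, and confirming that Conjecture~\ref{con:undirected} applies uniformly over all $n$ instances derived from the same circuit, is the technical heart of the argument.
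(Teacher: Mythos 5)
Your proposal is correct and follows essentially the same route as the paper: reduce multiplication to shifts, view the circuit (with the shift amount hardwired) as a rate-$1$ network coding solution, invoke Conjecture~\ref{con:undirected} to get a multicommodity flow whose total volume is bounded by the number of edges, and use exponential ball growth in bounded-degree graphs to force $\Omega(\lg n)$ distances for most source--sink pairs. The only cosmetic difference is that you sum the flow inequality over all $n$ shift instances, whereas the paper averages to fix a single shift $\ell_0$ for which at least $n/2$ pairs are already $\Omega(\lg n)$ apart and then argues with that one instance; both yield $s = \Omega(n \lg n)$.
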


In fact, we prove our $\Omega(n \lg n)$ lower bound for an even simpler problem than multiplication, namely the \emph{shift problem}: In the shift problem, we are given an $n$-bit string $x$ and an index $j \in [n]$. The goal is to construct a circuit that outputs the $2n$-bit string $y$ whose $i$th bit equals the $(i-j+1)$th bit of $x$ for every $j \le i \le j + n - 1$. Here we think of the index $j$ as being given in binary using $\lceil \lg n \rceil$ bits. We prove the following result:

\begin{theorem} \label{th:shiftLB}
Assuming Conjecture~\ref{con:undirected}, every boolean circuit with arbitrary gates and bounded in and out degrees that computes the shift problem has size $\Omega(n \lg n)$.
\end{theorem}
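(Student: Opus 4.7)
The plan is to adapt the network-coding-based lower bound paradigm of Adler \etal and Farhadi \etal to the circuit setting: interpret $C$ as a coding scheme on its underlying graph, invoke Conjecture~\ref{con:undirected} to pass to a concurrent multicommodity flow, and then exploit bounded degree to derive a distance-based lower bound.

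From a circuit $C$ of size $N$ I would construct its natural undirected graph $G$ with one vertex per gate and per input/output bit and one edge per wire, giving $|E(G)| = O(N)$ and maximum degree $\Delta = O(1)$. I assign each edge capacity $n$, designate the input-bit vertices as sources $s_1,\ldots,s_n$ and the output-bit vertices as sinks $t_1,\ldots,t_{2n}$, and consider the network coding instance with $n^2$ source-sink pairs $\{(s_i,t_{i+j-1})\}_{(i,j)\in[n]^2}$, each carrying one independent bit. To realize unit rate I view every capacity-$n$ edge as $n$ parallel copies and run $C$ independently $n$ times, once per shift $j\in[n]$, with the $\lceil\lg n\rceil$ shift bits hardwired to $j$ and the $i$-th data bit set to $b_{i,j}$; the $j$-th run uses the $j$-th parallel copy of every edge, and each gate computes its Boolean function independently on each copy. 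Correctness of $C$ then guarantees that $b_{i,j}$ arrives at $t_{i+j-1}$.

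Applying Conjecture~\ref{con:undirected} yields a concurrent multicommodity flow of the same unit rate, so there are simultaneous flows of one unit each between all $n^2$ pairs. The total flow-weighted path length is at most $\sum_{e} f(e) \le n|E(G)| = O(nN)$, while each flow has length at least $\mathrm{dist}_G(s_i,t_{i+j-1})$, whence
\[
\sum_{(i,j)\in[n]^2}\mathrm{dist}_G(s_i,t_{i+j-1})\;\le\;O(nN).
\]
For the matching lower bound on the left-hand side I would use that in a graph of maximum degree $\Delta$ the ball of radius $L$ around any vertex contains at most $\Delta^{L+1}=2^{O(L)}$ vertices, so the $k$-th closest sink to any source is at distance $\Omega(\lg k)$. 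Hence for each fixed $i$, $\sum_{j=1}^n \mathrm{dist}_G(s_i, t_{i+j-1}) = \Omega(n \lg n)$, and summing over $i$ gives $\Omega(n^2 \lg n)$. The two bounds combine to force $nN = \Omega(n^2 \lg n)$, i.e.\ $N = \Omega(n \lg n)$.

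The main obstacle I anticipate is making the coding step rigorous: one has to argue that the $n$ independent runs of $C$ on $n$ parallel edge copies truly constitute a single legitimate network coding scheme. This requires care around the $\lceil\lg n\rceil$ shift bits (which vary across runs but whose source vertices are shared across parallel copies) and around ensuring that every intermediate vertex computes its outgoing bits solely as a function of its incoming bits on its own parallel edges. Once this bookkeeping is handled, the remaining flow estimate and ball-growth argument are essentially routine.
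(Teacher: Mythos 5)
Your proposal is correct, but it packages the reduction differently from the paper. The paper uses the same ball-growth observation you invoke, but as an averaging argument over the shift: since at most $\sqrt{n}$ output gates lie within distance $\tfrac12\log_{2c} n$ of any fixed input gate $x_j$, there exists a single shift $\ell_0$ for which at least $n/2$ of the pairs $(x_j, y_{j+\ell_0-1})$ are far apart; hardwiring $\ell_0$ makes the circuit itself, with unit edge capacities, a rate-$1$ coding solution for just $n$ source-sink pairs, and one application of Conjecture~\ref{con:undirected} plus the flow-length count gives $|E| \ge \Omega(n\log_c n)$ directly. You instead build one instance with $n^2$ commodities and capacity-$n$ edges, realizing rate $1$ by running the circuit once per hardwired shift on parallel coordinates, and then lower-bound the total (i.e.\ average) source-sink distance. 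This is sound, and the bookkeeping you flag does go through: the shift input gates emit per-coordinate constants, every gate acts coordinatewise, and the pendant message nodes attached to each source neither shorten distances nor contribute useful routing capacity, so the flow count over circuit edges is at most $n|E|$ while ball growth gives $\sum_{i,j}\mathrm{dist}(x_i,y_{i+j-1}) = \Omega(n^2\lg n)$, forcing $N=\Omega(n\lg n)$. What the paper's route buys is precisely the step you worry about: by fixing one shift before invoking the conjecture there is no parallel-composition coding scheme to verify, and the instance stays at $n$ commodities with unit capacities. What your route buys is that no particular shift has to be selected, and the argument is closer in spirit to the many-commodity reductions of Adler et al.; both yield the same $\Omega(n\lg n)$ bound.
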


Theorem~\ref{th:multiplicationLB} follows as a corollary of Theorem~\ref{th:shiftLB} by observing that shifting $x$ by $j$ positions is equivalent to multiplication by $2^j$. Moreover, it is not hard to see that there is a linear sized circuit that has $\lceil \lg n \rceil$ input gates and $n$ output gates, where on an index $j \in [n]$, it outputs the number $2^j$ in binary (i.e. a single $1$-bit at position $j$).

We find it quite fascinating that even a simple instruction such as shifting requires circuits of size $\Omega(n \lg n)$, at least if we believe Conjecture~\ref{con:undirected}.

\paragraph{Valiant's Depth Reduction and Circuit Complexity Lower Bounds.}
In addition to our main lower bound results for multiplication, we also demonstrate that the network coding conjecture sheds new light on some fundamental conjectures by Valiant. 
In a 1977 survey 
Valiant~\cite{ValiantGraph} outlined potentially plausible attacks on the problem of 
proving a lower bound for the size of any circuit that can 
compute a permutation or even shifts of a given input. 
The goal was to prove that achieving both $O(n)$ size and $O(\lg n)$ depth 
for such circuits is impossible. 
While most of his attacks were rebuffed due to existence of 
complex and highly connected graphs that only had $O(n)$ edges (superconcentrators), 
Valiant outlined one last potential approach that could still be fruitful.
His main brilliant idea was to start with a circuit of some depth and by applying graph theoretical approaches
reducing the depth of the circuit while eliminating only a small number of edges. 
The hope was that information theoretical approaches could finish the job once the depth of the circuit was very low and once the (graph theoretical) complexity of the circuit was peeled away.

More formally, Valiant showed that for every circuit $C$ with $n$ input and output gates, of size $O(n)$, depth $O(\lg n)$ and fan-in $2$, and for every $\varepsilon > 0$, the function computed by $C$ can be computed by a boolean circuit with arbitrary gates $C'$ of depth $3$ with $n$ input and output gates and $\varepsilon n$ extra nodes. Moreover, the number of input gates directly connected to an output gate is bounded. That is, if we denote the set of input and output gates by $X$ and $Y$ respectively, then for every $y \in Y$, there are at most $O(n^{\varepsilon})$ wires connecting $y$ and $X$.

In turn, this reduction shows that it is enough to prove lower bounds on such depth $3$ circuits. 
Almost 20 years later and based on these ideas, Valiant~\cite{ValiantWhy} put forward several conjectures that if resolved could open the way 
for proving circuit complexity lower bounds. Loosely speaking, Valiant conjectured that if $\varepsilon \le 1/2$ then such depth $3$ circuits cannot compute cyclic-shift permutation. Before discussing Valiant's conjectures more formally, we first state our second main result, which essentially shows that Conjecture~\ref{con:undirected} implies one of Valiant's conjectures, albeit with a smaller (but still constant) bound on $\varepsilon$.

\begin{theorem} \label{th:depth3MultiplicationLB}
Let $C$ be a depth $3$ circuit that computes multiplication such that the following holds.
\begin{enumerate}
	\item The number of gates in the second layer of $C$ is at most $\varepsilon n$ for $\varepsilon \le 1/300$; and
	\item for every output gate $y$ of $C$, the number of input gates directly connected to $y$ is at most $c$.
\end{enumerate}
Then assuming Conjecture~\ref{con:undirected}, $c = \Omega\left(\frac{\lg n}{\lg \lg n}\right)$.
\end{theorem}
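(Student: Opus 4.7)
The plan is to reduce Theorem~\ref{th:depth3MultiplicationLB} to a statement about the shift problem, and then to run a network-coding argument specialized to the depth-$3$ structure, in the spirit of the proof of Theorem~\ref{th:shiftLB}. Since multiplying $x$ by $2^j$ is the same as shifting $x$ by $j$, and the $\lceil\lg n\rceil$-bit binary encoding of $j$ can be turned into $2^j$ by a linear-size circuit of depth at most two, any depth-$3$ multiplication circuit $C$ satisfying the hypotheses can be prepended with this converter to give a bounded-depth circuit $C'$ for the shift problem whose middle layer still has $O(\varepsilon n)$ gates and whose output gates still have at most $c+O(1)$ direct connections to inputs. It therefore suffices to derive $c=\Omega(\lg n/\lg\lg n)$ from $C'$.

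Next, following the framework of Adler \etal and Farhadi \etal, I would associate an undirected capacitated graph $G$ with $C'$: vertices are gates, and wires become undirected unit-capacity edges. To expose the middle-layer bottleneck, I split each middle gate $v$ into an input-side copy and an output-side copy joined by a single unit-capacity edge, which is legitimate because $v$ physically emits only one bit. The resulting graph has a cut separating inputs from outputs that is the union of the $O(\varepsilon n)$ split-edges at middle gates and the at most $cn$ direct input-to-output edges, for a total capacity of $O((\varepsilon+c)n)$.

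The coding scheme will be supplied by the circuit itself: executing $C'$ on an instance $(x,j)$ correctly delivers $x_i$ to $y_{i+j-1}$ for every $i$, so the collection of pairs $\{(x_i,y_{i+j-1})\}_{i,j}$ can be routed concurrently at a positive rate. Concretely, I would run $k$ shift values $j_1,\dots,j_k$ in parallel, choosing them so that the induced source-sink pairs are pairwise distinct and correspond to essentially disjoint routing tasks. Correctness of $C'$ then yields a network code for the resulting $\Theta(nk)$ pairs at per-pair rate $\Omega(1)$, while the cut above upper-bounds the multicommodity-flow rate by $O((\varepsilon+c)n/(nk))=O((\varepsilon+c)/k)$. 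Conjecture~\ref{con:undirected} equates the two rates, and the hypothesis $\varepsilon\le 1/300$ absorbs the $\varepsilon$ term, leaving $c=\Omega(k)$.

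The main obstacle is organizing the parallel executions so that $k$ can be taken as large as $\Theta(\lg n/\lg\lg n)$. Naively using $k$ independent physical copies of $C'$ multiplies the middle-layer capacity by $k$ and destroys the argument; conversely, packing $k$ shift computations into a single execution of $C'$ requires encoding several bits on each internal wire, and the arbitrary-gate model only supports this at the cost of a per-wire overhead of $\Theta(\lg k)=\Theta(\lg\lg n)$, which is precisely the source of the $\lg\lg n$ loss relative to the $\Omega(\lg n)$ bound that Theorem~\ref{th:shiftLB} gives for bounded-degree circuits. Reconciling these regimes — finding a parallelization that certifies coding rate $\Omega(1)$ across $\Theta(n\lg n/\lg\lg n)$ pairs while keeping the cut capacity at $O((\varepsilon+c)n)$ — is where the specific combinatorial structure of shifts must be used, and it is the central technical challenge.
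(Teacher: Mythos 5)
There is a genuine gap, and you have named it yourself: the entire weight of the bound rests on running $k=\Theta(\lg n/\lg\lg n)$ shift instances ``in parallel'' at constant per-pair rate over a network whose wires have capacity for a single evaluation of the circuit, and you offer no mechanism for doing this. Without parallelism your cut argument is vacuous: a single shift instance has total demand $n$ bits, while your input--output cut has capacity $\Theta((\varepsilon+c)n)$, so comparing the two yields only $c=\Omega(1)$. The two repairs you consider (duplicating the circuit, or packing several bits per wire) are correctly dismissed, but then nothing remains; the proposal is incomplete exactly at its crux, and in the network-coding model of the paper (one fixed graph, unit capacities on circuit wires, one message per source) there is no obvious way to certify coding rate $\Omega(1)$ for $\Theta(nk)$ pairs on this graph.

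The paper's proof avoids parallel instances altogether and extracts the $\lg n/\lg\lg n$ factor from a \emph{distance} (path-length) argument rather than a cut. One fixes a single good shift $\alpha_0$ such that, in the bipartite graph $\bar{C}[X\cup Y]$ of maximum degree $O(c)$, most source--sink pairs are at distance at least $\tfrac12\log_{2c}n$; any multicommodity flow serving a constant fraction of the demand inside this graph then uses total capacity $\Omega(n\log_{2c}n)$, while $\bar{C}[X\cup Y]$ has at most $cn$ edges, forcing $c\lg c=\Omega(\lg n)$. The obstacle that the middle layer puts every pair at distance at most $3$ is not handled by splitting middle gates (which leaves those short paths intact), but by averaging: there is a set ${\cal F}\subseteq\{0,1\}^n$ with $|{\cal F}|\ge 2^{(1-\varepsilon)n}$ on which all $\varepsilon n$ middle gates are constant, and hardwiring these values yields a depth-$2$ circuit correct on ${\cal F}$. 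Arbitrary inputs are then accommodated by augmenting the network with a supervisor node running the ${\cal F}$-correction-game protocol of Lemma~\ref{l:protocolBound}, with block size a fixed constant ($k=20$) and supervisor edge capacities set to the expected message lengths, so that rate $k$ is achievable by coding while only a constant fraction of the flow can route through the supervisor; Conjecture~\ref{con:undirected} then forces the remaining flow through the long paths of $\bar{C}[X\cup Y]$. These ingredients---the averaging/hardwiring of the middle layer, the correction game, and the distance-based flow bound---are absent from your sketch, and without them (or a genuinely new idea replacing your parallel-instance plan) the argument does not go through. As a minor point, the converter from $j$ to $2^j$ is also unnecessary: the paper simply restricts the multiplication circuit's second input to powers of two, which preserves the depth-$3$ structure exactly.
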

As with Theorem~\ref{th:multiplicationLB}, we prove Theorem~\ref{th:depth3MultiplicationLB} on an even restricted set of circuits, namely circuits that compute the shift function. We now turn to give a formal description of Valiant's Conjectures, and demonstrate how Theorem~\ref{th:depth3MultiplicationLB} brings us closer to settling them.

\paragraph{Valiant's Conjectures.} 
Let $\Gamma$ be a bipartite graph on two independent sets $X$ and $Y$ such that 
$X = \left\{x_1, \dots, x_n\right\}$ denotes a set of inputs and 
$Y = \left\{y_1, \ldots, y_n\right\}$ denotes a set of outputs. 
Furthermore assume, let $f_1, \ldots f_{\varepsilon n}$ be $\varepsilon n$ extra nodes and connect them by edges to all the nodes in $\Gamma$.
Denoting the resulting graph by $G$ consider all possible boolean circuits with arbitrary gates whose underlying topology is $G$. 
We say such a circuit computes a permutation $\pi\from Y \to X$ if for every assignment $x_1,\ldots,x_n \in \{0,1\}^n$ to the input gates, after the evaluation of the circuit $y_j$ is assigned $\pi(y_j)$ for every $j \in [n]$. 
Valiant conjectured that this should be impossible if $\varepsilon$ is too small or if $\Gamma$ has too few edges.
In particular, he proposed the following.
\begin{conjecture}\label{con:v1}
 If $\Gamma$ has maximum degree at most 3 and if $\varepsilon \le 1/2$, then there exists a permutation $\pi$ such that no circuit that has $G$ as its underlying topology can compute the permutation $\pi$. Moreover, there exists such $\pi$ that is a cyclic shift.
\end{conjecture}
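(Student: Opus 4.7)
The plan is to view a circuit with underlying topology $G$ as a depth-$3$ circuit that exactly matches the hypothesis of Theorem~\ref{th:depth3MultiplicationLB}: the input layer is $X$, the middle layer is $f_1,\ldots,f_{\varepsilon n}$, and the output layer is $Y$. The direct edges in $\Gamma$ from $X$ to $Y$ are precisely the wires whose number per output gate is controlled by the parameter $c$ in Theorem~\ref{th:depth3MultiplicationLB}. If $\Gamma$ has maximum degree at most $3$, then each output gate is directly wired to at most $3$ input gates, so $c \le 3$.

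First I would specialize $\pi$ to the cyclic shift permutation and try to promote a circuit computing a single fixed cyclic shift (with topology $G$) into a circuit computing the variable-$j$ shift function analyzed in Theorem~\ref{th:depth3MultiplicationLB}, losing only constant factors in $\varepsilon$ and $c$. If such a promotion goes through, then under Conjecture~\ref{con:undirected} Theorem~\ref{th:depth3MultiplicationLB} forces $c = \Omega(\lg n / \lg \lg n)$ whenever $\varepsilon \le 1/300$, contradicting $c \le 3$ for all sufficiently large $n$. This would yield Conjecture~\ref{con:v1} in the weakened form $\varepsilon \le 1/300$ (rather than $\varepsilon \le 1/2$) and conditional on Conjecture~\ref{con:undirected}.

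The main obstacle is exactly this reduction from a single fixed cyclic shift to the variable-$j$ shift problem. A fixed cyclic shift of $n$ bits carries only $n$ bits of information, whereas the lower bound in Theorem~\ref{th:depth3MultiplicationLB} exploits the $\Theta(n \lg n)$ bits of information in the variable-$j$ shift function. A natural route is to stack $\lceil \lg n \rceil$ separate circuits on the same topology $G$, each implementing a different cyclic shift whose composition spans all shift amounts, and then glue them with a constant number of multiplexer layers; one has to verify that this gluing keeps $c$ at $O(1)$ and $\varepsilon$ below $1/300$. A secondary obstacle is closing the gap from $\varepsilon \le 1/300$ to $\varepsilon \le 1/2$, which appears to require sharpening the constants in the network-coding reduction underlying Theorem~\ref{th:depth3MultiplicationLB}. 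Finally, the entire argument remains conditional on the undirected $k$-pairs conjecture, so an unconditional proof of Conjecture~\ref{con:v1} is out of reach by this route.
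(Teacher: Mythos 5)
First, note that the statement you are asked about is Valiant's conjecture itself, and the paper does not prove it: Conjecture~\ref{con:v1} remains open, and what the paper establishes (via Theorem~\ref{th:depth3MultiplicationLB}) is only a weakened form --- conditional on Conjecture~\ref{con:undirected} and with the threshold $\varepsilon \le 1/300$ in place of $\varepsilon \le 1/2$. Your proposal aims at exactly this weakened form, which is fine, but the route you sketch has a genuine flaw at its central step. You propose to ``promote'' circuits for single fixed cyclic shifts (all sharing the topology $G$) into one circuit for the variable-$j$ shift problem by stacking $\lceil \lg n\rceil$ copies and gluing them with multiplexer layers. That glued object no longer satisfies the hypotheses of Theorem~\ref{th:depth3MultiplicationLB}: it is not a depth-$3$ circuit, its middle layer is no longer of size $\varepsilon n$, and the multiplexing destroys the property that each output gate touches only $O(1)$ input gates directly. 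Since Theorem~\ref{th:depth3MultiplicationLB} is stated only for depth-$3$ circuits with exactly this structure, the contradiction you want (forcing $c = \Omega(\lg n/\lg\lg n)$ against $c\le 3$) cannot be extracted from the stacked circuit, and the information-counting intuition ($n$ bits versus $\Theta(n\lg n)$ bits) does not rescue it.

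The paper's actual link to Valiant's setting requires no such reduction. In the proof of Theorem~\ref{th:depth3MultiplicationLB}, the shift value $\alpha_0$ is chosen by an averaging argument that depends only on the \emph{topology} of $\bar{C}[X\cup Y]$ (which blocks are far from their shifted targets), and from that point on the argument only uses the circuit's behavior on that single shift, after hardwiring. Consequently the proof goes through verbatim when the gate functions are allowed to depend on the shift offset, i.e.\ when a different circuit with the same underlying topology $G$ is used for each cyclic shift --- which is precisely Valiant's formulation. This is the remark the paper makes right after stating Theorem~\ref{th:depth3MultiplicationLB}: one fixes the topology, finds a single bad cyclic shift $\alpha_0$ from the topology alone, and applies the network-coding argument to the circuit for that shift. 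So the correct move is not to build a variable-shift circuit at all, but to observe that the existing proof already quantifies over a single adversarially chosen shift. Your secondary points (the gap from $1/300$ to $1/2$, and the conditionality on Conjecture~\ref{con:undirected}) are accurate limitations shared with the paper, but the promotion step as you describe it would not work.
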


Theorem~\ref{th:depth3MultiplicationLB} shows that conditioned on Conjecture~\ref{con:undirected}, if $\varepsilon \le 1/300$ then Valiant's first conjecture holds. We note that our proof for Theorem~\ref{th:depth3MultiplicationLB} continues to hold even if the gates' boolean functions are fixed after the shift offset is given. That is, if only the topology is fixed in advance. This coincides exactly with the formulation of Valiant's conjecture.
Valiant further conjectured the following.
\begin{conjecture}\label{con:v2}
 If $\Gamma$ has at most $n^{2-\delta}$ edges for some constant $\delta>0$, and if $\varepsilon \le 1/2$, 
 then there exists a permutation $\pi$ such that no circuit that has $G$ as its underlying topology can compute the permutation $\pi$. Moreover, there exists such $\pi$ that is a cyclic shift.
\end{conjecture}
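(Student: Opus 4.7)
The plan is to adapt the network-coding framework of Theorem~\ref{th:depth3MultiplicationLB} so that a bound on the \emph{total} number of $\Gamma$-edges (instead of the maximum output degree) suffices to force a contradiction. As a first step, I would exploit the symmetry of cyclic shifts: each edge $(x_a, y_b)$ of $\Gamma$ is aligned with exactly one cyclic shift (the one sending $x_a$ to $y_b$), so a simple averaging over the $n$ cyclic shifts yields a shift $\pi$ for which at most $n^{2-\delta}/n = n^{1-\delta}$ of the demand pairs $(x_{\pi^{-1}(j)}, y_j)$ are realised by direct $\Gamma$-edges. For this $\pi$, at least $n - n^{1-\delta}$ outputs must obtain their target bit exclusively through the $\varepsilon n$ middle gates $f_1, \ldots, f_{\varepsilon n}$.

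Next, I would set up a network-coding instance on the undirected version of $G$, attaching a source at each $x_i$ and a sink at each $y_j$, and declaring one commodity per ``middle-routed'' demand pair identified above. A correct assignment of gate functions to the topology $G$ then corresponds to a coding scheme of unit rate on each of these commodities, so Conjecture~\ref{con:undirected} lets us pass from coding rate to the classical multicommodity flow rate. The task then reduces to showing that no multicommodity flow satisfying these demands fits inside the available capacity: the $\varepsilon n$ middle gates contribute only $O(\varepsilon n^2)$ edge capacity in total, with very sparse cuts isolating each middle gate, while the direct layer contributes only the $\Gamma$-edges that were not already consumed.

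The main obstacle, I expect, is that a single permutation produces a demand of only $\Theta(n)$ units, which the middle layer supports comfortably; the $\Omega(n \lg n)$-style lower bounds of Theorems~\ref{th:multiplicationLB} and~\ref{th:depth3MultiplicationLB} derived their bite precisely from packing $n$ different shifts into one instance via the shift offset. To recover that leverage in the single-permutation setting of Conjecture~\ref{con:v2}, I would argue that because the topology $G$ is fixed before $\pi$ is chosen, its middle layer must effectively service every cyclic shift simultaneously; replacing the choice of $\pi$ by a distribution over cyclic shifts and tracing this through the network-coding reduction should pull us back into the multi-shift regime where Theorem~\ref{th:depth3MultiplicationLB} already applies. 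If this reduction loses a constant factor, the realistic near-term milestone is to establish Conjecture~\ref{con:v2} for some $\varepsilon \le \varepsilon_0$ with $\varepsilon_0 < 1/2$, mirroring the $\varepsilon \le 1/300$ gap already present in Theorem~\ref{th:depth3MultiplicationLB}.
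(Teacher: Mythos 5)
You are attempting to prove a statement that the paper does not prove and, more importantly, that the paper explicitly records as being \emph{false} in the form stated. Conjecture~\ref{con:v2} is Valiant's second conjecture; it is merely restated in the paper, and in the Related Work section the authors point out that Riis~\cite{Riis2007} refuted it as stated: all cyclic shifts are realizable on such topologies with $\varepsilon = \tfrac{1}{2} - \tfrac{1}{2n^{1-\delta}}$ when $\Gamma$ has $n^{1+\delta}$ edges --- parameters that satisfy the hypotheses of Conjecture~\ref{con:v2} (total edge count well below $n^{2-\delta'}$ and $\varepsilon \le 1/2$). So no proof along your lines (or any lines) can succeed for the statement as written; at best one could hope for the repaired version Riis suggests, with $\varepsilon = \Theta(1/\lg\lg n)$, or for the non-cyclic-shift part of the claim.

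Concretely, the place where your plan breaks is the passage from the coding/flow reduction to a capacity contradiction. The machinery of Theorem~\ref{th:depth3MultiplicationLB} derives its strength from the hypothesis that every output has at most $c$ direct wires to inputs, which makes most source--sink pairs lie at distance $\Omega(\log_c n)$ in $\bar{C}[X \cup Y]$ and forces total flow cost $\Omega(n \log_c n)$, hence $c = \Omega(\lg n/\lg\lg n)$. Under the hypothesis of Conjecture~\ref{con:v2} the only bound is $n^{2-\delta}$ edges in total, so degrees (even average degrees) can be as large as $n^{1-\delta}$; then $\log_{2c} n = O(1)$, essentially all pairs are within constant distance, and the flow lower bound degenerates to $O(n)$, which both $\Gamma$ and the $\varepsilon n$ middle gates accommodate with room to spare --- as you yourself note, a single permutation generates only $\Theta(n)$ demand. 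Your proposed fix, averaging over a distribution of cyclic shifts to ``pull back into the multi-shift regime,'' conflates two different problems: in Valiant's setting the gate functions may be chosen \emph{after} $\pi$ is fixed, and the shift offset is not carried on any wire of $G$, so there is no single network coding solution that must service all shifts simultaneously; each shift may be realized by a different circuit on the same topology. The paper's remark that its proof tolerates gate functions fixed after the offset applies only under the max-degree-$3$ hypothesis of Conjecture~\ref{con:v1}, and Riis's construction shows that in the parameter regime of Conjecture~\ref{con:v2} the conclusion you are aiming for simply does not hold.
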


\subsection{Related Work}

\paragraph{Lower Bounds for Multiplication.}
There are a number of previous lower bounds for multiplication in various restricted models of computation.
Clifford and Jalsenius~\cite{CJ11} considered a streaming variant of multiplication, where one number is fixed and the other is revealed one digit at a time. They require that a digit of the output is reported before the next digit of the input is revealed. In this streaming setting, they prove an $\Omega((\delta /w)n \lg n)$ lower bound, where $\delta$ is the number of bits in a digit and $w$ is the word size. For $\delta=1$ and $w=O(1)$, this is $\Omega(n \lg n)$. Ponzio~\cite{Ponz98} considered multiplication via read-once branching programs, i.e. programs that have bounded working memory and may only read each input bit exactly once. He proved that any read-once branching program for computing the middle bit of the product of two $n$-bit numbers, must use $\Omega(\sqrt{n})$ bits of working memory. Finally, we also mention the work of Morgenstern~\cite{Morgenstern:1973:NoaLBotLCotFFT} who proved lower bounds for computing the related FFT. Morgenstern proved an $\Omega(n \lg n)$ lower bound for computing the \emph{unnormalied} FFT via an arithmetic circuit when all constants used in the circuit are bounded. Unfortunately this doesn't say anything about the complexity of multiplying two $n$-bit numbers.

\paragraph{Valiant's Conjectures.}
Despite their importance, Valiant's conjectures are still mostly open. One interesting development by Riis~\cite{Riis2007}, shows that Conjecture~\ref{con:v2} as stated is incorrect. Riis proved that all cyclic shifts are realizable for $\varepsilon = \tfrac{1}{2} - \tfrac{1}{2n^{1-\delta}}$ where $n^{1 + \delta}$ is the total number of edges of $\Gamma$. Riis further conjectured that replacing the bound on $\varepsilon$ by a slightly stricter bound should result in a correct conjecture. Specifically, Riis suggest bounding $\varepsilon = \Theta\left(\frac{1}{\lg \lg n}\right)$.

\section{Preliminaries}
\label{sec:prelim}
We now give a formal definition of  Boolean circuits with arbitrary gates, followed by definitions of the $k$-pairs communication problem, the multicommodity flow problem. In the two latter problems we reuse some of the definitions used by Farhadi \etal~\cite{FHLS18}, which have been simplified a bit compared to the more general definition by Adler \etal~\cite{Adler:soda}. In particular, we have forced communication networks to be directed acyclic graphs. This is sufficient to prove our lower bounds and simplifies the definitions considerably.

\paragraph{Boolean Circuits with Arbitrary Gates.} 
A {\em Boolean Circuit with Arbitrary Gates} with $n$ source or input nodes and $m$ target or output nodes is a directed acyclic graph $C$ with $n$ nodes of in-degree $0$, which are called {\em input gates}, and are labeled with input variables $X=\{x_i\}_{i \in [n]}$ and $m$ nodes out-degree $0$, which are called {\em output gates} and are labeled with output variables $Y=\{y_i\}_{i \in [m]}$. All other nodes are simply called {\em gates}. For every gate $u$ of in-degree $k \ge 1$, $u$ is labeled with an arbitrary function $f_u:\{0,1\}^k \to \{0,1\}$. The circuit is also equipped with a topological ordering $v_1,\ldots,v_t$ of $C$, in which $v_i=x_i$ for $i \in [n]$ and $v_{t-i+1}=y_{m-i+1}$ for all $i \in [m]$. The {\em depth} of a circuit $C$ is the length of the longest path in $C$.
An {\em evaluation} of a circuit on an $n$ bit input $x=(x_1,\ldots,x_n) \in \{0,1\}^n$ is conducted as follows. For every $i \in [n]$, assign $x_j$ to $v_j$. For every $j \ge n+1$, assign to $v_j$ the value $f_{v_j}(u_1,\ldots,u_k)$, where $u_1,\ldots,u_k$ are the nodes of $C$ with edges going into $v_j$ in the order induced by the topological ordering. The {\em output} of $C$ on an $n$ bit input $x=(x_1,\ldots,x_n)$, denoted $C(x_1,\ldots,x_n)$ is the value assigned to $(y_1,\ldots,y_m)$ in the evaluation. We say a circuit computes a function $f : \{0,1\}^n \to \{0,1\}^m$ if for every $x=(x_1,\ldots,x_n) \in \{0,1\}^n$, $f(x_1,\ldots,x_n)=C(x_1,\ldots,x_n)$.

For every $j \in [t]$ and $b \in \{0,1\}$, we {\em hardwire} $b$ for $v_j$ in $C$ by removing $v_j$ and all adjacent edges from $C$, and replacing $v_j$ for $b$ in the evaluation of $f_{v_i}$ for every $i > j$ such that $v_jv_i$ is an edge in $C$.

\paragraph{$k$-Pairs Communication Problem.} 
The input to the $k$-pairs communication problem is a directed acyclic graph $G=(V,E)$ where each edge $e \in E$ has a capacity $c(e) \in \R^+$. There are $k$ sources $s_1,\dots,s_k \in V$ and $k$ sinks $t_1,\dots,t_k \in V$. 

Each source $s_i$ receives a message $A_i$ from a predefined set of messages $A(i)$. It will be convenient to think of this message as arriving on an in-edge. Hence we add an extra node $S_i$ for each source, which has a single out-edge to $s_i$. The edge has infinite capacity.

A network coding solution specifies for each edge $e \in E$ an alphabet $\Gamma(e)$ representing the set of possible messages that can be sent along the edge. For a node $v \in V$, define $\In(u)$ as the set of in-edges at $u$. A network coding solution also specifies, for each edge $e=(u,v) \in E$, a function $f_e : \prod_{e' \in \In(u)} \Gamma(e') \to \Gamma(e)$ which determines the message to be sent along the edge $e$ as a function of all incoming messages at node $u$. Finally, a network coding solution specifies for each sink $t_i$ a decoding function $\sigma_i : \prod_{e \in \In(t_i)} \Gamma(e) \to M(i)$. The network coding solution is correct if, for all inputs $A_1,\dots,A_k \in \prod_i A(i)$, it holds that $\sigma_i$ applied to the incoming messages at $t_i$ equals $A_i$, i.e. each source must receive the intended message.

In an execution of a network coding solution, each of the extra nodes $S_i$ starts by transmitting the message $A_i$ to $s_i$ along the edge $(S_i,s_i)$. Then, whenever a node $u$ has received a message $a_e$ along all incoming edges $e=(v,u)$, it evaluates $f_{e'}(\prod_{e \in \In(u)} a_e)$ on all out-edges and forwards the message along the edge $e'$.

We define the \emph{rate} of a network coding solution as follows: Let each source receive a uniform random and independently chosen message $A_i$ from $A(i)$. For each edge $e$, let $A_e$ denote the random variable giving the message sent on the edge $e$ when executing the network coding solution with the given inputs. The network coding solution achieves rate $r$ if:
\begin{itemize}
\item $H(A_i) \geq r$ for all $i$.
\item For each edge $e \in E$, we have $H(A_e) \leq c(e)$.
\end{itemize}
Here $H(\cdot)$ denotes binary Shannon entropy. The intuition is that the rate is $r$, if the solution can handle sending a message of entropy $r$ bits between every source-sink pair.

\paragraph{Multicommodity Flow.}
A multicommodity flow problem in an undirected graph $G=(V,E)$ is specified by a set of $k$ source-sink pairs $(s_i,t_i)$ of nodes in $G$. We say that $s_i$ is the source of commodity $i$ and $t_i$ is the sink of commodity $i$. Each edge $e \in E$ has an associated capacity $c(e) \in \R^+$. 
\
A (fractional) solution to the multicommodity flow problem specifies for each pair of nodes $(u,v)$ and commodity $i$, a flow $f^i(u,v) \in [0,1]$. Intuitively $f^i(u,v)$ specifies how much of commodity $i$ that is to be sent from $u$ to $v$. The flow satisfies \emph{flow conservation}, meaning that:
\begin{itemize}
\item For all nodes $u$ that is not a source or sink, we have $\sum_{w \in V} f^i(u,w) - \sum_{w \in V} f^i(w,u) = 0$.
\item For all sources $s_i$, we have $\sum_{w \in V} f^i(s_i,w) - \sum_{w \in V}f^i(w,s_i) = 1$.
\item For all sinks we have $\sum_{w \in V} f^i(w,t_i) - \sum_{w \in V} f^i(t_i,w) = 1$.
\end{itemize}
The flow also satisfies that for any pair of nodes $(u,v)$ and commodity $i$, there is only flow in one direction, i.e. either $f^i(u,v)=0$ or $f^i(v,u)=0$. Furthermore, if $(u,v)$ is not an edge in $E$, then $f^i(u,v) = f^i(v,u)=0$. A solution to the multicommodity flow problem achieves a rate of $r$ if:
\begin{itemize}
\item For all edges $e=(u,v) \in E$, we have $r \cdot \sum_i  (f^i(u,v) + f^i(v,u)) \leq c(e)$.
\end{itemize}
Intuitively, the rate is $r$ if we can handle a demand of $r$ for every commodity.

\paragraph{The Undirected $k$-Pairs Conjecture.} Conjecture~\ref{con:undirected} implies the following for our setting: Given an input to the $k$-pairs communication problem, specified by a directed acyclic graph $G$ with edge capacities and a set of $k$ source-sink pairs, let $r$ be the best achievable network coding rate for $G$. Similarly, let $G'$ denote the undirected graph resulting from making each directed edge in $G$ undirected (and keeping the capacities and source-sink pairs). Let $r'$ be the best achievable flow rate in $G'$. Conjecture~\ref{con:undirected} implies that $r \leq r'$.

Having defined coding rate and flow rate formally, we also mention that a result of Braverman \etal~\cite{braverman2016network} implies that if there exists a graph $G$ where the network coding rate $r$, and the flow rate $r'$ in the corresponding undirected graph $G'$, satisfies $r \geq (1+\eps)r'$ for a constant $\eps>0$, then there exists an infinite family of graphs $\{G^*\}$ for which the corresponding gap is at least $(\lg |G^*|)^c$ for a constant $c>0$. So far, all evidence suggest that no such gap exists, as formalized in Conjecture~\ref{con:undirected}.

\section{Key Tools and Techniques}
The main idea in the heart of both proofs is the simple fact that in a graph with $t$ vertices and maximum degree at most $c$, most node pairs lie far away from one another. Specifically, for every node $u$ in $G$, at least $t - \sqrt{t}$ nodes have distance $\ge \tfrac{1}{2}\log_c t$ from $u$. While this key observation is almost enough to prove Theorem~\ref{th:shiftLB}, the proof of Theorem~\ref{th:depth3MultiplicationLB} requires a much more subtle approach, as there is no bound on the maximum degree in the circuits in question. The only bound we have is on the number of wires going directly between from input gates into output gates. Specifically, every two nodes in the underlying undirected graph are at distance $\le 3$ (see figure~\ref{fig:circuit}).

In order to overcome this obstacle, we present a construction of a communication network based on the circuit $C$ that essentially eliminates the middle layer in the depth-$3$ circuit $C$, thus leaving a bipartite graph with bounded maximum degree.
To this end, we observe that since the size of the middle layer is bounded by $\varepsilon n$, then there exists a large set ${\cal F}$ of inputs in $\{0,1\}^n$ such that on all inputs from ${\cal F}$, the gates $f_1,\ldots,f_{\varepsilon n}$ attain the same values. By hardwiring these values to the circuit, we can evaluate the circuit for all inputs in ${\cal F}$ on a depth-$2$ circuit $\Gamma$ obtained from $C$ by removing $f_1,\ldots, f_{\varepsilon n}$. We next turn to construct the communication network.
Employing ideas recently presented by Farhadi \etal~\cite{FHLS18}, we "wrap" the depth-$2$ circuit by adding source and target nodes. In order to cope with inputs that do not belong to ${\cal F}$, we add a designated {\em supervisor} node $u$ (see figure~\ref{fig:network}). Loosely speaking, the source nodes transmit their input to $u$, and $u$ sends back the information needed to "edit" the input string $x$ and construct an input string $x' \in {\cal F}$, which is then transferred to the circuit $\Gamma$ as blackbox.

\paragraph{The Correction Game.} In order to bound the edge capacities of the network $G$ in a way that the supervisor node can transmit enough information to achieve a high communication rate, but then again not allow to much flow to go through the supervisor when considering $G$ as a multicommodity flow instance, Farhadi \etal~\cite{FHLS18} defined a game between a set of $m$ players and a supervisor, where given a fixed set ${\cal F} \subseteq \{0,1\}^n$ and a random string $\beta \in \{0,1\}^n$ given as a concatenation of $m$ strings $\beta_1,\ldots,\beta_m$ of length $n/m$ each, the goal is to "correct" $x$ and produce a string $\chi \in \{0,1\}^n$ such that $\beta \oplus \chi \in {\cal F}$. The caveat is that the only communication allowed is between the players and the supervisor. That is, no communication, and thus no cooperation, is allowed between the $m$ players. Formally, the game is defined as follows.

\begin{definition}
Let ${\cal F} \subseteq \{0,1\}^n$. The {\em ${\cal F}$-correction game} with $m+1$ players is defined as follows. The game is played by $m$ ordinary players $p_1,\ldots, p_m$ and one designated {\em supervisor} player $u$.  The supervisor $u$ receives $m$ strings $\beta_1,\ldots,\beta_m \in \{0,1\}^{n/m}$ chosen independently at random. For every $\ell \in [m]$, $u$ then sends $p_\ell$ a message $R_\ell$. Given $R_\ell$, the player $p_\ell$ produces a string $\chi_\ell \in \{0,1\}^{n/m}$ such that $(\beta_1 \oplus \chi_1)\circ(\beta_2 \oplus \chi_2)\circ(\beta_m \oplus \chi_m) \in {\cal F}$.
\end{definition}

Farhadi \etal additionally present a protocol for the ${\cal F}$-correction game in which the supervisor player sends prefix-free messages to the $m$ players, and moreover, they give a bound on the amount of communication needed as a function of the number of players and the size of ${\cal F}$. 
\begin{lemma} [\cite{FHLS18}]\label{l:protocolBound}
If $|{\cal F}| \ge 2^{(1-\varepsilon)n}$, then there exists a protocol for the ${\cal F}$-correction game with $m+1$ players such that the messages $\{R_\ell\}_{\ell \in [m]}$ are prefix-free and
$$\sum_{\ell \in [m]}{\mathbb{E}[|R_\ell|]} \le 3m + 2m\lg\left(\sqrt{\frac{\varepsilon}{2}}\cdot \frac{n}{m} + 1\right) + \sqrt{\frac{\varepsilon}{8}}\cdot n \lg\frac{2}{\varepsilon} \;,$$
\end{lemma}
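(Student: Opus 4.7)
The plan is to have the supervisor choose, for each input $\beta = \beta_1 \circ \cdots \circ \beta_m$, a target $z(\beta) = z_1 \circ \cdots \circ z_m \in \mathcal{F}$ and then send each player $p_\ell$ a prefix-free description of the block-wise correction $\chi_\ell \eqdef \beta_\ell \oplus z_\ell$. The decoding function of $p_\ell$ simply reconstructs $\chi_\ell$ from $R_\ell$ and outputs it; since $\beta_\ell \oplus \chi_\ell = z_\ell$ for every $\ell$, the concatenation $(\beta_1 \oplus \chi_1) \circ \cdots \circ (\beta_m \oplus \chi_m)$ equals $z \in \mathcal{F}$, and correctness is automatic. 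All the work lies in choosing $z$ well and then compressing $\chi_\ell$.

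The first step is to exploit the assumption $|\mathcal{F}| \ge 2^{(1-\varepsilon)n}$ to argue that, for a uniformly random $\beta$, some $z \in \mathcal{F}$ is block-wise close to $\beta$. By a volume/counting argument comparing $|\mathcal{F}|$ against Hamming-ball sizes in $\{0,1\}^{n/m}$, one can find $z$ whose block weights $w_\ell \eqdef |\chi_\ell|$ satisfy $\mathbb{E}_\beta\!\left[\sum_\ell w_\ell\right] \le \sqrt{\varepsilon/8}\,n$; one can additionally enforce each individual $w_\ell \le \sqrt{\varepsilon/2}\,n/m$ at the cost of a cheap fall-back protocol that is triggered only on the rare $\beta$'s for which no such threshold-respecting $z$ exists.

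The second step is the encoding of $R_\ell$ itself, consisting of (i) an Elias-gamma-style prefix-free code for the integer $w_\ell$ of length at most $1 + 2\lfloor \lg(w_\ell+1)\rfloor$, followed by (ii) an enumeration of the $w_\ell$ positions of $1$-bits in $\chi_\ell$, using $\lceil \lg \binom{n/m}{w_\ell}\rceil$ bits. Taking expectations and summing over $\ell$, concavity of $\lg(x+1)$ turns $\sum_\ell (1 + 2\lg(w_\ell+1))$ into at most $3m + 2m\lg\!\left(\sqrt{\varepsilon/2}\,n/m + 1\right)$ after absorbing the per-block overhead and the fall-back flag into the constant $3m$ term. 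For the position piece, concavity of the binary entropy $H$ gives $\sum_\ell \lg \binom{n/m}{w_\ell} \le n\,H(W/n)$ with $W \eqdef \sum_\ell w_\ell$, and $H(p) \le p\lg(2/p)$ for small $p$ combined with $\mathbb{E}[W] \le \sqrt{\varepsilon/8}\,n$ yields the final $\sqrt{\varepsilon/8}\,n\lg(2/\varepsilon)$ term.

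The hard part will be the first step: producing the claimed bound on $\mathbb{E}[W]$ \emph{and} on the per-block weight distribution with precisely the constants $\sqrt{\varepsilon/2}$ and $\sqrt{\varepsilon/8}$. This calls for a careful two-parameter argument---fix a per-block weight threshold, design a fall-back protocol that ships $\chi_\ell$ literally in $n/m$ bits preceded by a one-bit flag when no threshold-respecting $z \in \mathcal{F}$ exists, and use $|\mathcal{F}| \ge 2^{(1-\varepsilon)n}$ in a counting argument to show that the probability of the fall-back event is small enough that its contribution is absorbed into the stated constants. Once the densitity estimate on $\mathcal{F}$ is converted into such a per-block guarantee, the encoding-length bookkeeping above plugs in directly to yield the lemma.
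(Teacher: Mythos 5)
First, note that the paper does not prove this lemma at all---it is imported verbatim from Farhadi \etal~\cite{FHLS18}---so there is no in-paper argument to compare against; your proposal has to stand on its own, and as written it has a genuine gap at exactly the step you yourself flag as ``the hard part.'' The claimed first-step estimate, that for uniform $\beta$ one can pick $z(\beta)\in\mathcal{F}$ with $\mathbb{E}\bigl[\sum_\ell w_\ell\bigr]\le \sqrt{\varepsilon/8}\,n$, is not just unproven: for the nearest-point choice it is false in general. Take $\mathcal{F}$ to be a Hamming ball of radius $(1/2-\delta)n$ with $\delta=\sqrt{(\ln 2/2)\varepsilon}$, so that $|\mathcal{F}|\approx 2^{H(1/2-\delta)n}\ge 2^{(1-\varepsilon)n}$. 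A uniform $\beta$ has weight $\approx n/2$, hence its distance to $\mathcal{F}$ is $\approx \delta n \approx 0.59\sqrt{\varepsilon}\,n$, strictly larger than $\sqrt{\varepsilon/8}\,n\approx 0.35\sqrt{\varepsilon}\,n$; by vertex isoperimetry this example is essentially extremal, so no counting argument can deliver the weight bound you assert. The lemma's constants are nevertheless consistent with such examples because the per-flipped-bit description cost there is only about $\tfrac12\lg(1/\varepsilon)$ rather than $\lg(2/\varepsilon)$; in other words, the bound cannot be obtained by your modular split ``(weight $\le \sqrt{\varepsilon/8}\,n$) $\times$ ($\lg(2/\varepsilon)$ per bit)'' and must come from a joint accounting of the weight profile and the enumeration cost (which is what the argument in~\cite{FHLS18} does).

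A second, related flaw is the inequality $H(p)\le p\lg(2/p)$ ``for small $p$,'' which is false (e.g.\ $H(0.01)\approx 0.081 > 0.01\lg 200\approx 0.076$); the correct elementary bound is $H(p)\le p\lg(1/p)+p\lg e$, and the extra $p\lg e$ term is not negligible at the scale of the constants you are trying to hit---this is precisely where the coefficient $\sqrt{\varepsilon/8}$ and the factor $\lg(2/\varepsilon)$ have to be produced. Your encoding layer (Elias-gamma for $w_\ell$ plus $\lceil\lg\binom{n/m}{w_\ell}\rceil$ for positions, concavity over blocks, a flagged fall-back for atypical $\beta$) is a reasonable skeleton and the correctness/prefix-freeness part is fine, but with the density-to-weight conversion both missing and quantitatively wrong, and the entropy inequality invalid, the proposal does not establish the stated bound.
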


\section{A Lower Bound for Boolean Circuits Computing Multiplication}
In this section we show that conditioned on Conjecture~\ref{con:undirected}, every bounded degree circuit computing multiplication must have size at least $\Omega(n \lg n)$, thus proving Theorems~\ref{th:multiplicationLB} and~\ref{th:shiftLB}.
In fact, we will prove something slightly stronger. Define the shift function $s : \{0,1\}^n\times[n] \to \{0,1\}^{2n}$ as follows. For every $x=(x_1,\ldots,x_n) \in \{0,1\}^n$ and $\ell \in [n]$, $s(x,\ell) = (y_1,\ldots,y_{2n})$ where $y_j = x_{j-\ell+1}$ if $\ell \le j \le \ell+n-1$ and $y_j=0$ otherwise. We will show that every circuit with bounded in and out degrees that computes the shift function on $n$-bit numbers has size $\Omega(n \lg n)$. Clearly, a circuit that can compute the product of two $n$-bit numbers can also compute the shift function. Let $c$ denote the maximum in and out degree in $C$, and let $j \in [n]$. Then in the undirected graph induced by $C$, there are at most $\sqrt{n}$ nodes whose distance from $x_j$ is at most $\frac{1}{2}\log_{2c}n$. Therefore among $y_j,\ldots,y_{j+n-1}$, at least $n - \sqrt{n} - 1 \ge n - 2\sqrt{n}$ are at distance at least $\frac{1}{2}\log_{2c}n$. In other words, $\Pr_{\ell \in [n]}[d_{\hat{C}}(x_j, y_{j+\ell-1}) \ge \frac{1}{2}\log_{2c}n] \ge 1-\frac{2}{\sqrt{n}}$, where $\hat{C}$ denotes the undirected graph induced by $C$ (by removing edge directions). Therefore there exists a shift $\ell_0 \in [n]$ such that $|\{j \in [n] : d_{\bar{C}}(x_j, y_{j+\ell_0-1}) \ge \frac{1}{2}\log_{2c}n\}| \ge n-2\sqrt{n} \ge n/2$. 

Fixing $\ell_0$, let consider the following communication problem. For each $j \in [n]$, $s_j=x_j \in_R \{0,1\}$ and $t_j = y_{j+\ell_0-1}$. The circuit $C$ equipped with $1$-uniform edge capacities is a network coding solution to this problem with rate $r \ge 1$. By the undirected $n$-pairs conjecture, there is a multicommodity flow in $\hat{C}$ that transfers one unit of flow from each source to its corresponding sink. For every $j$, let $f^j : E \to [0,1]$ be the flow associated with commodity $j$. Then
$$|E| = \sum_{e \in E}{c_e} \ge \sum_{e \in E}{\sum_{j \in [n]}{f^j(e)}} \ge \Omega(n \log_c n) \;.$$

\section{A Lower Bound for Depth \texorpdfstring{$3$}{3} Boolean Circuits Computing Multiplication}

\begin{figure}[t]
	\centering
		\includegraphics[scale=0.7]{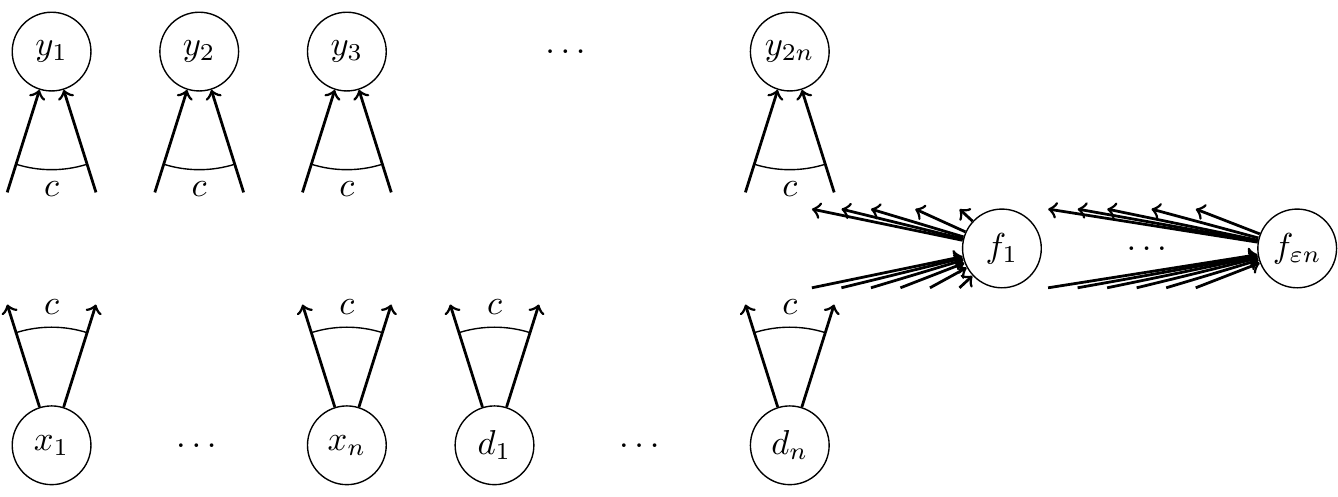}
   \caption{The depth $3$ circuit $C$.}
   \label{fig:circuit}
\hrule
\end{figure}

Let $C$ be a depth $3$ circuit that computes multiplication such that the number of gates in the second layer of $C$ is at most $\varepsilon n$ for some small $\varepsilon \in (0,1)$ and for every $u \in Y$, $deg_{\bar{C}[X \cup Y]}(u) \le c$, where once again $\bar{C}$ denotes the undirected graph induced by $C$, and $\bar{C}[X \cup Y]$ is the subgraph of $\bar{C}$ induced by $X \cup Y$. By slightly increasing $c$ and $\varepsilon$ (by a small constant factor) and without loss of generality, we can assume that this applies for all $u \in X$ as well.

Denote the input and output gates of $C$ by $X = \{x_1,\ldots,x_n,\hat{x}_1,\ldots,\hat{x}_n\}$ and $Y = \{y_1,\ldots,y_{2n}\}$ respectively, and denote the set of the middle-layer gates by $F = \{f_1,\ldots,f_{\varepsilon n}\}$ (see Figure~\ref{fig:circuit}).

\begin{figure}[t]
	\centering
		\includegraphics[scale=0.7]{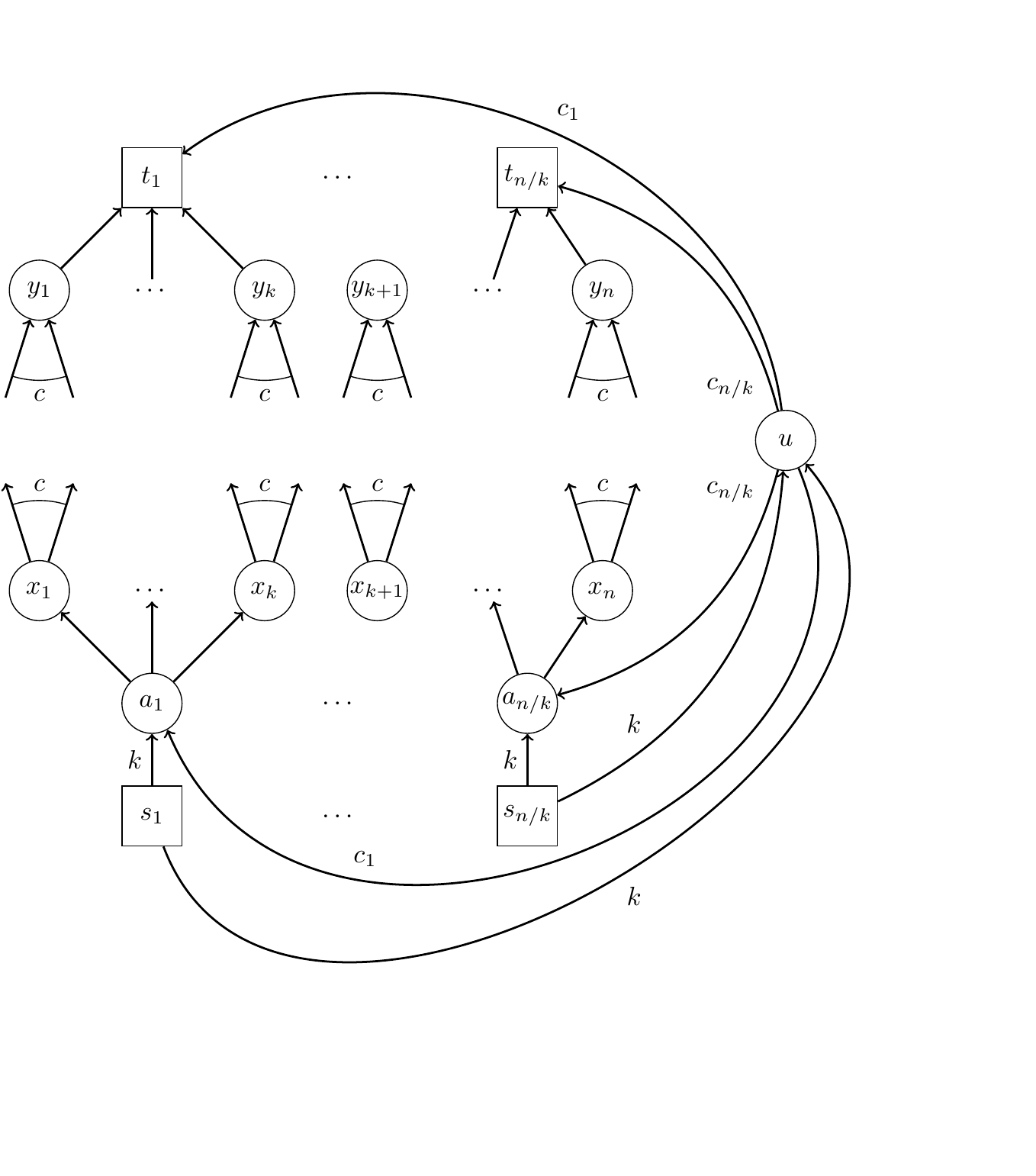}
   \caption{Given the $2$-layer circuit $\Gamma$ spanned by $x_1,\ldots,x_n,y_1,\ldots,y_n$, we construct the communication network graph $G$.}
   \label{fig:network}
\hrule
\end{figure}
As before, we focus on computing the shift function, thus limiting the input to $(\hat{x}_1,\ldots,\hat{x}_n)$ to have exactly one $1$-entry.
We next partition $(x_1,\ldots,x_n)$ into consecutive blocks of size $k=20$ bits each.  For every $\ell \in [n/k]$ let $B_\ell = \{k(\ell-1)+1,\ldots,k\ell\}$ be the set of indices belonging to the $\ell$th block. 
\begin{definition}
For every $\alpha \in [n]$ and $\ell \in [n/k]$, we say $B_\ell$ is {\em far from all targets} (with respect to $\alpha$) if for all sources in the block are at distance at least $\frac{1}{2}\log_{2c}n$ from all respective destinations in $\bar{C}[X \cup Y]$. That is for every $u,v \in B_\ell$, $d_{\bar{C}[X \cup Y]}(x_u,y_{v+\alpha-1}) \ge \frac{1}{2}\log_{2c}n$.
\end{definition}
Let $\alpha \in_R [n]$. By the constraint on the degrees, for every $j \in [n]$, there are at most $\sqrt{n}$ nodes whose distance from $x_j$ is at most $\frac{1}{2}\log_{2c}n$ in $\bar{C}[X \cup Y]$. Therefore for every $\ell \in [n/k]$, 
$$\Pr_{\alpha \in_R[n]}\left[B_\ell \; \text{is far from all targets}\right] \ge 1 - \frac{k^2}{\sqrt{n}} \;.$$
By averaging we get that for large enough $n$ there is some $\alpha_0 \in [n]$ such that there are at least $\frac{n}{k} - k\sqrt{n} \ge \frac{9n}{10k}$ blocks which are far from all targets. Without loss of generality, we may assume for ease of notation that $\alpha_0=1$. 
By hardwiring $1$ for $\alpha_0$ into the circuit $C$, the circuit now simply transfers $(x_1,\ldots,x_n)$ to $(y_1,\ldots,y_n)$. 

\paragraph{Reduction to Network Coding.}  Let $x = (x_1,\ldots,x_n)$ and $i \in [\varepsilon n]$. By slightly abusing notation, we denote the value of the gate $f_i$ when evaluating the circuit by $f_i(x_1,\ldots,x_n)$.
By averaging, there exist a string $(\hat{f}_1, \ldots, \hat{f}_{\varepsilon n})$ and a set ${\cal F} \subseteq \{0,1\}^n$ such that $|{\cal F}| \ge 2^{(1-\varepsilon) n}$ and such that for every $x = (x_1,\ldots,x_n) \in {\cal F}$ and $i \in [\varepsilon n]$, $f_i(x_1,\ldots,x_n) = \hat{f}_i$.
By hardwiring $(\hat{f}_1, \ldots, \hat{f}_{\varepsilon n})$ for $(f_1,\ldots,f_n)$ into the circuit $C$, we get a new circuit denoted $\Gamma$ that contains only the input and output gates of $C$, and transfers $(x_1,\ldots,x_n)$ to $(y_1,\ldots,y_n)$ for every $(x_1,\ldots,x_n) \in {\cal F}$. Moreover, the set of edges between $X$ and $Y$ in $\Gamma$ is equal to the set of edges between $X$ and $Y$ in $C$.

Next, we construct a communication network $G$ by adding some nodes and edges to $\Gamma$, as demonstrated also in Figure~\ref{fig:network}. We add a new set of nodes $\{s_j,a_j,t_j\}_{j=1}^{n/k} \cup \{u\}$. For every $\ell \in [n/k]$, add edges $s_{\ell}a_{\ell}$ and $s_{\ell}u$ of capacity $k$ and edges $ua_{\ell}$ and $ut_{\ell}$ of capacity $c_{\ell} = \mathbb{E}[|R_\ell|]$, where $R_\ell$ is the message sent to player $p_\ell$ by the supervisor player in the ${\cal F}$-correction game protocol for $n/k+1$ players guaranteed in Lemma~\ref{l:protocolBound}. In addition, for every $\ell \in [n/k]$ and every $j \in B_{\ell}$ add edges $a_{\ell}x_j$ and $y_jt_{\ell}$ of capacity $1$. All edges of $\Gamma$ are assigned capacity of $1$.

\paragraph{Transmitting Data.} In what follows, we will lower bound the communication rate of the newly constructed network $G$.
\begin{lemma}\label{l:comRate}
There exists a network coding solution on $G$ that achieves rate $k$.
\end{lemma}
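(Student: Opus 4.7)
The plan is to simulate the ${\cal F}$-correction game of Lemma~\ref{l:protocolBound} inside the network $G$, so that the supervisor $u$ can coordinate corrections which guarantee that the bit pattern entering the hardwired circuit $\Gamma$ always lies in ${\cal F}$. First, I would hand each source $s_\ell$ a uniformly random string $A_\ell \in \{0,1\}^k$, interpreting it as the block $\beta_\ell$ of the correction game; $s_\ell$ forwards $A_\ell$ along both of its outgoing edges, $s_\ell a_\ell$ and $s_\ell u$, each of capacity $k$, which suffices since $H(A_\ell) = k$.

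Once $u$ has collected $\beta = \beta_1 \circ \cdots \circ \beta_{n/k}$, it runs the protocol of Lemma~\ref{l:protocolBound} internally and obtains the prefix-free messages $R_1, \ldots, R_{n/k}$, transmitting $R_\ell$ along both $u a_\ell$ and $u t_\ell$. These edges have capacity $c_\ell = \mathbb{E}[|R_\ell|]$; prefix-freeness together with Shannon's source coding bound yields $H(R_\ell) \le \mathbb{E}[|R_\ell|] = c_\ell$, so these capacities are respected. Then $a_\ell$ plays the role of player $p_\ell$: using $R_\ell$ it derives the correction $\chi_\ell$ prescribed by the protocol, forms $\beta_\ell \oplus \chi_\ell$ using its locally stored copy of $\beta_\ell$, and sends each resulting bit along the capacity-$1$ edge $a_\ell x_j$ for $j \in B_\ell$. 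By Lemma~\ref{l:protocolBound} the full input $x = \beta \oplus \chi$ lies in ${\cal F}$, so $\Gamma$ agrees with $C$ on $x$; since the shift offset $\alpha_0 = 1$ is hardwired, the outputs satisfy $y_j = x_j$ for every $j \in [n]$.

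To close the loop, each output gate $y_j$ forwards its bit along the capacity-$1$ edge $y_j t_\ell$ for the unique $\ell$ with $j \in B_\ell$. The sink $t_\ell$ has now received both $R_\ell$ and the block $y_{B_\ell} = \beta_\ell \oplus \chi_\ell$; it rederives $\chi_\ell$ from $R_\ell$ by running the same player-$\ell$ decoding rule that $a_\ell$ used, and XORs $\chi_\ell$ into $y_{B_\ell}$ to recover $\beta_\ell = A_\ell$. Since every source message has entropy exactly $k$, the achieved rate is $k$.

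The only delicate point is the entropy accounting on the supervisor edges $u a_\ell$ and $u t_\ell$, which works out precisely because Lemma~\ref{l:protocolBound} delivers prefix-free codewords whose expected length equals the chosen capacity $c_\ell$; all remaining edges carry either a fixed-length $k$-bit string or a single bit, so their capacity constraints are immediate. I do not anticipate any genuine obstacle beyond this bookkeeping, since the correctness of the simulation is inherited directly from the correctness of the protocol in Lemma~\ref{l:protocolBound} and from the fact that $\Gamma$ realises the identity shift on inputs in ${\cal F}$.
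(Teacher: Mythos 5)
Your proposal is correct and follows essentially the same route as the paper: sources double-send $A_\ell$ to $a_\ell$ and to the supervisor $u$, the supervisor runs the $\mathcal{F}$-correction protocol of Lemma~\ref{l:protocolBound} and sends $R_\ell$ to both $a_\ell$ and $t_\ell$, the corrected block $A_\ell \oplus \chi_\ell$ traverses $\Gamma$ (valid since the corrected input lies in $\mathcal{F}$), and $t_\ell$ recovers $A_\ell$ by XORing out $\chi_\ell$, with the same entropy accounting (capacity $k$ on source edges, capacity $1$ on single-bit edges, and $H(R_\ell) \le \mathbb{E}[|R_\ell|] = c_\ell$ via prefix-freeness and Shannon's source coding theorem) as in the paper's proof.
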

To this end, let $A_1,\ldots,A_{n/k} \in \{0,1\}^k$ be independent uniform random variables. We next give a protocol by which the sources $s_1,\ldots,s_{n/k}$ transmit $A_1,\ldots,A_{n/k}$ to the targets $t_1,\ldots,t_{n/k}$. The protocol employs as a an intermediate step the correction game protocol guaranteed by Lemma~\ref{l:protocolBound}.
\begin{enumerate}
	\item For every $\ell \in [n/k]$, $s_\ell$ sends $A_\ell$ to $a_\ell$ over the edge $s_\ell a_\ell$ and to $u$ over the edge $s_\ell u$.
	\item Employing the ${\cal F}$-correction game protocol with $n/k+1$ players, for every $\ell \in [n/k]$, $u$ sends a message $R_\ell$ to $a_\ell$ over the edge $ua_\ell$ and to $t_\ell$ over the edge $ut_\ell$. Following the correction game protocol, for every $\ell$, given $R_\ell$, $a_\ell$ and $t_\ell$ produce a string $\chi_\ell$ satisfying that $(A_1\oplus \chi_1)\circ\ldots\circ(A_{n/k}\oplus \chi_{n/k}) \in {\cal F}$.
	\item For every $\ell \in [n/k]$ and every $i \in [k]$, $a_\ell$ transmits the $i$th bit of $A_\ell \oplus \chi_\ell$ to the $i$th gate in the $\ell$th block, namely $x_{(\ell-1)k+i}$. Note that $(x_1,\ldots,x_n) = (A_1\oplus \chi_1)\circ\ldots\circ(A_{n/k}\oplus \chi_{n/k}) \in {\cal F}$.
	\item Next, the communication network employs the circuit $\Gamma$ and transmits $(x_1,\ldots,x_n)$ to $(y_1,\ldots,y_n)$. For every $\ell \in [n/k]$ and every $i \in B_\ell$, $y_i$ transmits $x_i$ to $t_\ell$. 
	\item Finally, for every $\ell \in [n/k]$, $t_\ell$ now holds both $A_\ell \oplus \chi_\ell$ and $\chi_\ell$. Therefore $t_\ell$ can recover $A_\ell$.
\end{enumerate}
By invoking the protocol described above, every one of the $n/k$ sources sends $k$ bits to the corresponding target. For every edge $e \in G$, let $A_e$ denote the random variable giving the message sent on the edge $e$ when executing the protocol. 
\begin{claim}
For every $e \in G$, $H(A_e) \le c_e$.
\end{claim}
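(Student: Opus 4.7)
The plan is to establish the bound edge by edge, walking through the five kinds of edges in $G$ defined just before the claim and checking that the random message carried on each one has entropy at most the assigned capacity. For every edge, $A_e$ is determined by the protocol in Lemma~\ref{l:comRate}, so I just need to identify what $A_e$ is and bound its entropy against $c_e$.

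For the ``bulk'' edges the argument is immediate. Every edge of $\Gamma$, every edge $a_\ell x_{(\ell-1)k+i}$, and every edge $y_i t_\ell$ has capacity $1$ and carries a single bit in the protocol, so $H(A_e)\le 1=c_e$ trivially. Each edge $s_\ell a_\ell$ and $s_\ell u$ has capacity $k$ and transmits $A_\ell\in\{0,1\}^k$, a uniform $k$-bit string, giving $H(A_e)\le k=c_e$. Thus the only nontrivial edges are $ua_\ell$ and $ut_\ell$, on which the message is $R_\ell$, the supervisor's message to player $p_\ell$ in the $\mathcal{F}$-correction protocol guaranteed by Lemma~\ref{l:protocolBound}, with capacity $c_\ell=\mathbb{E}[|R_\ell|]$.

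The main (and really only) point is therefore to bound $H(R_\ell)\le \mathbb{E}[|R_\ell|]$. Here I plan to invoke the fact, which is exactly where the prefix-free property from Lemma~\ref{l:protocolBound} gets used, that for any prefix-free source the expected code length upper-bounds the Shannon entropy. Concretely, since the set of possible values of $R_\ell$ is prefix-free, Kraft's inequality gives $\sum_r 2^{-|r|}\le 1$. Letting $p(r)=\Pr[R_\ell=r]$ and $q(r)=2^{-|r|}/Z$ with $Z\le 1$, Gibbs' inequality yields
\[
H(R_\ell)=-\sum_r p(r)\log p(r)\;\le\;-\sum_r p(r)\log q(r)\;=\;\mathbb{E}[|R_\ell|]+\log Z\;\le\;\mathbb{E}[|R_\ell|]=c_\ell.
\]
Combining the three cases completes the claim. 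I do not anticipate any real obstacle: the only subtlety is remembering that Lemma~\ref{l:protocolBound} was designed precisely to supply the prefix-free property so that the classical ``entropy $\le$ expected prefix-free code length'' bound is applicable, which is what justifies setting $c_\ell=\mathbb{E}[|R_\ell|]$ in the first place.
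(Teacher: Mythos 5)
Your case analysis and the entropy bounds match the paper's proof exactly: capacity-$k$ source edges carrying $A_\ell$, capacity-$1$ edges carrying single bits, and the supervisor edges handled via $H(R_\ell)\le\mathbb{E}[|R_\ell|]$ using the prefix-free property (the paper simply cites Shannon's source coding theorem where you spell out the Kraft/Gibbs argument). The proposal is correct and takes essentially the same route as the paper.
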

\begin{proof}
First note that for every $\ell \in [n/k]$, every edge $e$ leaving $s_\ell$ has capacity $k$ and transmits $A_\ell$. Therefore $H(A_\ell) = k \le c_e$.	
Every edge $e$ that is not leaving any source nor $u$ has capacity $1$ and transmits exactly one bit (not necessarily uniformly random) of information. Therefore $c_e = 1 \ge H(A_e)$.
Finally, let $e$ be an edge leaving $u$. Then there exists some $\ell \in [n/k]$ such that $e=ua_\ell$ or $e=ut_\ell$. In both cases the message transmitted on $e$ is $R_\ell$ and the capacity $c_e$ of $e$ satisfies $c_e = c_\ell = \mathbb{E}[|R_\ell|] \ge H(R_\ell)$, where the last inequality follows from Shannon's Source Coding theorem, as all messages are prefix-free.
\end{proof}
We can therefore conclude that the network $G$ achieves rate $\ge k$, and the proof of Lemma~\ref{l:comRate} is complete.
\paragraph{Deriving the Lower Bound.} By Conjecture~\ref{con:undirected}, the underlying undirected graph $\bar{G}$ achieves a multicommodity-flow rate $\ge k$. Therefore there exists a multicommodity flow $\{f^{\ell}\}_{\ell \in [n/k]} \subseteq [0,1]^{E(\bar{G})}$ that achieves rate $k$. We first observe that at most a constant fraction of the flow can go through the supervisor node $u$. To see this, we note that as $|{\cal F}| \ge 2^{(1-\varepsilon)n}$, then by Lemma~\ref{l:protocolBound} the expected total information sent by the supervisor in the ${\cal F}$-correction game with $n/k$ players is at most 
\begin{equation}
\frac{3n}{k} + \frac{2n}{k}\lg\left(k\sqrt{\frac{\varepsilon}{2}} + 1\right) + \sqrt{\frac{\varepsilon}{8}} \cdot n \lg\frac{2}{\varepsilon} \le \frac{5n}{k}
\label{eq:protocol}
\end{equation}
Therefore by the definition of the capacities $\{c_\ell\}_{\ell \in [n/k]}$ we get that for small enough (constant) $\varepsilon$,
\begin{equation}
\sum_{\ell \in [n/k]}{c_{ua_\ell}} = \sum_{\ell \in [n/k]}{c_{ut_\ell}} = \sum_{\ell \in [n/k]}{c_{\ell}} \le \frac{5n}{k}
\label{eq:capacities}
\end{equation}
Since $\{f^{\ell}\}_{\ell \in [n/k]}$ achieves rate $k$ we conclude that 
\begin{equation*}
\begin{split}
 k \cdot\sum_{v \in V(\bar{G}) : uv \in E(\bar{G})}{ \sum_{\ell \in [n/k]}{(f^{\ell}(u,v) + f^{\ell}(v,u))}} &\le \sum_{v \in V(\bar{G}) : uv \in E(\bar{G})}{c_e}\\
&= \sum_{\ell \in [n/k]}{c_{us_\ell}} + \sum_{\ell \in [n/k]}{(c_{ua_\ell}+c_{ut_\ell})}\le n + \frac{10n}{k}\;,
\end{split}
\label{eq:uCapacity}
\end{equation*}
and therefore 
\begin{equation}
\sum_{v \in V(\bar{G}) : uv \in E(\bar{G})}{ \sum_{\ell \in [n/k]}{(f^{\ell}(u,v) + f^{\ell}(v,u))}} \le \frac{n}{k} + \frac{10n}{k^2} \le 1.5\frac{n}{k} \;.
\label{eq:uTotal}
\end{equation}
By the flow-conservation constraint, we know that 
therefore the total amount of flow that can go through $u$ is $\le 0.75 \frac{n}{k}$. By averaging, at least a $1/6$ fraction of the sources send at least $1/10$ units of flow through $\bar{G} - u$. By the choice of $\alpha_0$, in $\bar{G}-u$, at least a $1/15$ of the sources are at least $\frac{1}{2}\log_{2c}(n)$ away from their targets. Without loss of generality, assume these are the first $\tfrac{n}{15k}$ sources. We conclude that 
\begin{equation}
\begin{split}
cn \ge |E[X \cup Y]| &= \sum_{e \in E[X \cup Y]}{c_e} \ge k \cdot \sum_{e = vw \in E[X \cup Y]}{\sum_{\ell \in [n/k]}{f^{\ell}(v,w)+f^{\ell}(w,v)}} \\
&\ge k \cdot \sum_{\ell \in [n/15k]}{\sum_{e = vw \in E[X \cup Y]}{f^{\ell}(v,w)+f^{\ell}(w,v)}} \ge \frac{n}{30}\log_{2c}(n) \;,
\end{split}
\end{equation}
and therefore $c \ge \Omega\left(\frac{\lg n}{\lg \lg n}\right)$, and the proof of Theorem~\ref{th:depth3MultiplicationLB} is now complete.

\subsection{Remarks and Extensions}
For sake of fluency, some minor remarks and extensions were intentionally left out of the text, and will be discussed now. 
\paragraph{Circuits with Bounded Average Degree.} Our results still hold if we relax the second requirement of Theorem~\ref{th:depth3MultiplicationLB} and require instead that the number of edges in $\bar{C}[X \cup Y]$ is at most $cn$. That is, the average degree in $\bar{C}[X \cup Y]$ is at most $c$. To see this, note that under this assumption, there are at most $0.001n$ gates in $X \cup Y$ whose degree in $\bar{C}[X \cup Y]$ is larger than $1000c$. For each such gate $v$, add a new node $f$ in the middle layer, and connect $v$ and all the neighbours of $v$ in $\bar{C}[X \cup Y]$ to $f$. Then delete all the edges adjacent to $v$ in $\bar{C}[X \cup Y]$. The number of nodes added to the middle layer is at most $0.001n$, and the degree of all nodes in $\bar{C}[X \cup Y]$ is now bounded by $1000c$. The rest of our proof continues as before.

\paragraph{Shifts vs. Cyclic Shifts.} In order to prove lower bounds for circuits computing multiplication, our results are stated in terms of shifts (which are a special case of products, as mentioned). This is in contrast to Valiant's conjectures, which are stated in terms of cyclic shifts. However, we draw the readers attention to the fact that our proofs work for cyclic shifts as well. The exact same arguments apply, and the proofs remain unchanged.

\newcommand{\etalchar}[1]{$^{#1}$}

\end{document}